\documentclass[11pt]{article}
\usepackage[fontset=fandol]{ctex}
\usepackage[margin=1in]{geometry}
\usepackage{amsmath,amssymb,amsthm}
\usepackage[backend=biber,style=authoryear,maxcitenames=2,maxbibnames=99]{biblatex}

\newcommand{\mytitle}{Logics of Filter Bubbles}

\newcommand{\myabstract}{%

Filter bubbles can be understood as configurations characterized by within-group proximity and separation from outsiders. Personalized feeds may form or preserve such configurations. We develop hybrid-style logics for reasoning about filter bubbles. The static logic characterizes bubble-shaped configurations, while its dynamic extension captures their formation and persistence under personalized feeds. We provide sound and strongly complete axiomatizations for the static and dynamic logics and prove their decidability.
}
\newcommand{\mykeywords}{filter bubbles $\cdot$ social network logic $\cdot$ hybrid logic $\cdot$ event models}

\usepackage{mathtools}
\usepackage{microtype}
\usepackage{tikz}
\usetikzlibrary{arrows.meta,fit}
\usepackage[pdfborder=0]{hyperref}
\hypersetup{
  pdftitle={Logics of Filter Bubbles},
  pdfauthor={Lei Li and Jialiang Yan},
  pdfkeywords={filter bubbles, social network logic, events models}
}

\allowdisplaybreaks
\numberwithin{equation}{section}

\makeatletter
\renewcommand\section{\@startsection{section}{1}{\z@}%
  {-2.0ex \@plus -.5ex \@minus -.2ex}%
  {.8ex \@plus .2ex}%
  {\normalfont\large\bfseries}}
\renewcommand\subsection{\@startsection{subsection}{2}{\z@}%
  {-1.5ex \@plus -.4ex \@minus -.2ex}%
  {.5ex \@plus .1ex}%
  {\normalfont\normalsize\bfseries}}
\newtheoremstyle{mythm}{1ex}{1ex}{}{}%
  {\bfseries}{}{.6em}{}
\theoremstyle{mythm}
\newtheorem{theorem}{Theorem}[section]
\newtheorem{lemma}[theorem]{Lemma}
\newtheorem{proposition}[theorem]{Proposition}
\newtheorem{corollary}[theorem]{Corollary}
\theoremstyle{definition}
\newtheorem{definition}[theorem]{Definition}
\renewenvironment{proof}[1][Proof]{\par
  \pushQED{\hfill$\square$}%
  \normalfont \topsep3\p@\@plus2\p@\@minus1\p@ \labelsep1em\relax
  \trivlist
  \item[\hskip\labelsep\bfseries #1]\ignorespaces
}{%
  \popQED\endtrivlist\@endpefalse
}
\makeatother

\AtBeginDocument{%
  \setlength{\abovedisplayskip}{5pt plus 2pt minus 2pt}%
  \setlength{\belowdisplayskip}{5pt plus 2pt minus 2pt}%
  \setlength{\abovedisplayshortskip}{3pt plus 1pt minus 1pt}%
  \setlength{\belowdisplayshortskip}{3pt plus 1pt minus 1pt}%
  \setlength{\jot}{1.5pt}%
  \setlength{\textfloatsep}{8pt plus 2pt minus 2pt}%
  \setlength{\floatsep}{6pt plus 2pt minus 2pt}%
  \setlength{\intextsep}{6pt plus 2pt minus 2pt}%
}

\newcommand{\Lang}{\mathcal{L}}
\newcommand{\Act}{\mathcal{F}}
\newcommand{\pre}{\operatorname{pre}}
\newcommand{\post}{\operatorname{post}}
\newcommand{\Prof}{\mathbb{P}}
\newcommand{\Sim}{\mathbb{T}}
\newcommand{\FBForm}{\mathbb{F}}
\newcommand{\FBPres}{\mathbb{R}}
\newcommand{\FBDiss}{\mathbb{D}}

\begin{document}
\title{\mytitle}
\author{Lei Li\thanks{School of Philosophy, Shaanxi Normal University, Xi'an, China. Email: \texttt{lileity@snnu.edu.cn}} \and
Jialiang Yan\thanks{Institute of Logic, China University of Political Science and Law, Beijing, China. Email: \texttt{jialiangyan@cupl.edu.cn}}}
\date{}
\maketitle
\begin{abstract}
\myabstract
\end{abstract}
\noindent\textbf{Keywords:} \mykeywords
\bigskip

\section{Introduction}

The term \emph{filter bubble} is typically used both for a social configuration and for an account of how that configuration arose. Structurally, agents within a group are close to one another and separated from outsiders. Dynamically, that configuration forms or persists under a personalized feed \parencite{zuiderveenborgesius2016,geschke2019}. In this paper, we call the configuration itself a \emph{bubble} and call \emph{filter bubble} for a bubble that forms or persists under a specified feed update. We aim to develop logics to represent the configuration and the update.

Logical models of social influence begin with interpersonal relations. Beliefs or links change through friendship, communication, or observation \parencite{seligman2013facebook,christoff2015,pedersen2019,baccini2025,wangyi21}. We call this an \emph{internal} perspective on social change. Platform curation introduces a different kind of intervention. A recommender may send different feeds to recipients who need not communicate with, influence, or even know one another. Logics of personalized announcements provide a related account of directed informational change \parencite{belardinelli2025personalized}. Our \emph{external} perspective instead studies how a platform's allocation changes attitude profiles and, through them, the pattern of distances across a population.

To formalize both a bubble-shaped configuration and the external feed process under which it forms or persists, we develop a network logic and its dynamic extension. The static language treats agents as named points in a bounded distance space and defines a bubble by bounded internal disagreement and stricter separation from outsiders. The dynamic extension adds deterministic feed models whose preconditions determine which agents undergo attitude-relevant updates and whose postconditions determine the resulting attitudes. The updated profiles then determine new threshold relations, so the
dynamic language can express bubble formation, persistence, and dissolution. In the finite setting, we provide sound and strongly complete axiomatizations for both systems and prove their decidability.

The rest of the paper is organized as follows. Section~2 introduces the static network models and the axiomatization. Sections~3 and 4 extend this framework to a dynamic setting by defining feed models, and prove the soundness and completeness of the resulting logic. Section~5 concludes.

\section{The static logic of filter bubbles}

\subsection{Static network models and attitude profiles}

Let $m$ be a positive integer and we define the finite threshold grid as follows:
\begin{equation}
G=\left\{0,\frac1m,\frac2m,\ldots,1\right\}.
\label{eq:grid}
\end{equation}
For $r,s\in G$, write $r\oplus s=\min\{1,r+s\}$. 


Let $A$ be a finite nonempty set of agents, $P$ be a nonempty set of propositional atoms, and $\Omega$ be a finite set of nominals, with $P\cap\Omega=\varnothing$ and $|\Omega|=|A|$. We define a static network model.

\begin{definition}[Static network model]
A static network model is a tuple
\[
M=\bigl(A,(\sim_r)_{r\in G},V\bigr),
\]
where $A$ is a finite nonempty set of agents; $V:P\cup\Omega\to\wp(A)$ is a valuation. For every $i\in\Omega$, $V(i)$ is a
singleton, and the map sending $i$ to the unique member of $V(i)$ is a
bijection from $\Omega$ onto $A$;  each $\sim_r\subseteq A\times A$ is a threshold relation.
For all $a,b,c\in A$ and $r,s\in G$, the threshold relations satisfy:
\[
\begin{gathered}
a\sim_r a\; \text{(reflexivity)},\qquad
a\sim_r b\Longleftrightarrow b\sim_r a\; \text{(symmetry)},\qquad
a\sim_1 b\; \text{(boundedness)},\\
a\sim_r b\Longrightarrow a\sim_s b\quad\text{if }r\leq s\; \text{(threshold nesting)}, \\
a\sim_r b\text{ and }b\sim_s c
\Longrightarrow a\sim_{r\oplus s}c\; \text{(triangle closure)}.
\end{gathered}
\]

\end{definition}

Every agent is named by exactly one nominal. Agents in $A$ and
their names in $\Omega$ nevertheless remain semantic and syntactic
objects, respectively. The relation $a\sim_r b$ means that $a$ and $b$ are at distance at most $r$. Based on the model, we define the grid-valued distance

\begin{equation}
d_M(a,b)=\min\{r\in G:a\sim_r b\}.
\label{eq:distance}
\end{equation}

Distinct agents may have distance zero. Conversely, every distance on $A$ with values in $G$ and diameter at most one induces the displayed family by $a\sim_r b$ exactly when $d_M(a,b)\leq r$\footnote{Distance-indexed modalities and relational representations of metric
spaces have been studied in logics of metric spaces
\parencite{kutz2003,wolterzakharyaschev2005}.}.

\subsection{Language and semantics}

The static language $\Lang_0$ is the least set of formulas generated by
\begin{equation}
\varphi ::= p\mid i\mid\neg\varphi\mid(\varphi\land\psi)
\mid\langle r\rangle\varphi\mid @_i\varphi,
\label{eq:staticgrammar}
\end{equation}
where $p\in P$, $i\in\Omega$, and $r\in G$. Each proposition letter $p\in P$ represents an agent's attitude. Other Boolean connectives
and the constants $\top$ and $\bot$ are defined as usual. A nominal
$i$ names a particular agent. The threshold
formula $\langle r\rangle\varphi$ says that some agent within 
threshold $r$ satisfies $\varphi$. $@_i\varphi$ is the hybrid formula \parencite{blackburn1995,blackburn2000}, and says that $\varphi$
holds at the agent named by $i$. As usual,
$[r]\varphi:=\neg\langle r\rangle\neg\varphi$ means that every agent
within threshold $r$ satisfies $\varphi$.

Given a static network model $M=\bigl(A,(\sim_r)_{r\in G},V\bigr)$, $a\in A$ is the agent relative to which the current formula is interpreted. The truth conditions are defined recursively as follows:

\begin{tabular}{@{}lcl@{}}
$M,a\models p$
& $\text{iff}$
& $a\in V(p)$ \\

$M,a\models i$
& $\text{iff}$
& $a\in V(i)$ \\

$M,a\models\neg\varphi$
& $\text{iff}$
& $M,a\not\models\varphi$ \\

$M,a\models\varphi\land\psi$
& $\text{iff}$
& $M,a\models\varphi$ and $M,a\models\psi$ \\

$M,a\models\langle r\rangle\varphi$
& $\text{iff}$
& $M,b\models\varphi$ for some $b\in A$ such that $a\sim_r b$ \\

$M,a\models @_i\varphi$
& $\text{iff}$
& $M,b\models\varphi$, where $V(i)=\{b\}$
\end{tabular}

\vspace{0.3cm}

The propositional clause means the agent's attitudes. The nominal clause makes $i$ true only at the agent it names. The threshold modality $\langle r\rangle$ quantifies over agents within threshold $r$ of the current agent. The hybrid operator $@_i$ shifts evaluation to the agent named by $i$, independently of the current evaluation point. Consequently, if $V(i)=\{b\}$ and $V(j)=\{c\}$, then
\[
M,a\models @_i\langle r\rangle j
\quad\text{iff}\quad
b\sim_r c.
\]

Thus $@_i\langle r\rangle j$ states that the agents named by $i$ and $j$ are within distance $r$. Its truth does not depend on the initial evaluation point $a$. 

For a set $\Gamma$ of formulas, we write $M,a\models\Gamma$ when
$M,a\models\gamma$ for every $\gamma\in\Gamma$. Local semantic consequence
is defined by
\[
\Gamma\models\varphi
\quad\text{iff}\quad
M,a\models\Gamma\text{ implies }M,a\models\varphi
\]
for every static network model $M$ and every $a\in A$. We also write
$M\models\chi$ when $M,a\models\chi$ for every $a\in A$, and
$\models\chi$ when $M\models\chi$ for every static network model $M$.

\subsection{Bubbles in static network}

We now define a group by agents' names. Let $M$ be a static network model and let $I$ be a nonempty  subset of $\Omega$. The group of agents named by $I$ in $M$ is
\(
F_I^M:=\bigcup_{i\in I}V(i),
\)
and the formula naming this group is
\(
f_I:=\bigvee_{i\in I}i.
\)
Since the naming map is a bijection, $F_I^M$ is a nonempty  subset of $A$.

Given $\varepsilon,\eta\in G$ with $\varepsilon<\eta$, call $F_I^M$ an $(\varepsilon,\eta)$-\emph{bubble} when
\begin{align}
d_M(a,b)&\leq\varepsilon
&& (\forall a,b\in F_I^M),
\tag{Cohesion}\label{eq:internal-cohesion}\\
d_M(a,c)&>\eta
&& (\forall a\in F_I^M,\ \forall c\in A\setminus F_I^M).
\tag{Separation}\label{eq:external-separation}
\end{align}

This definition draws on a structural idea of high within-group homogeneity together with high between-group heterogeneity \parencite{estebanray1994,bramson2016}.

The cohesion condition says that disagreement among members is at most
the lower threshold. It describes a like-minded cluster but
says nothing about its boundary. External separation adds that boundary
by requiring even the closest outsider to lie beyond the higher
threshold. The strict gap $\varepsilon<\eta$ prevents a group from
qualifying merely because of a knife-edge classification.

Formula~\eqref{eq:bubble}  defines an internally cohesive and externally isolated region.

\begin{equation}
B_I^{\varepsilon,\eta}:=
\bigwedge_{i\in I}@_i
(
\bigwedge_{j\in I}\langle\varepsilon\rangle j
\ \land\ [\eta]f_I
),
\label{eq:bubble}
\end{equation}

\noindent where the $\langle\varepsilon\rangle$-conjunct bounds the internal diameter of $F_I^M$. The $[\eta]$-conjunct says that each agent within  threshold $\eta$ is  named by a nominal in $I$.

\begin{proposition}
\label{prop:modal-definability}
Let $M$ be a static network model, we have
\[
M\models B_I^{\varepsilon,\eta} \;\text{iff}\; F_I^M \text{ is an} (\varepsilon,\eta)\text{-bubble}.
\]

\end{proposition}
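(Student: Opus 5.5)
The plan is to unfold the truth conditions of $B_I^{\varepsilon,\eta}$ and match them, conjunct by conjunct, with \eqref{eq:internal-cohesion} and \eqref{eq:external-separation}. First I note that $B_I^{\varepsilon,\eta}$ is a conjunction of $@$-formulas, so its truth at $a$ does not depend on $a$. Hence $M\models B_I^{\varepsilon,\eta}$ holds iff $M,a\models B_I^{\varepsilon,\eta}$ for some (equivalently every) $a\in A$. Since $A$ is nonempty, this reduces to a single evaluation point. For each $i\in\Omega$ I write $a_i$ for the unique element of $V(i)$. By the bijectivity of naming, $F_I^M=\{a_i:i\in I\}$, and for $b\in A$ we have $M,b\models f_I$ iff $b\in F_I^M$.

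Next I unfold the two conjuncts. By the $@$ clause, $M\models B_I^{\varepsilon,\eta}$ iff for every $i\in I$ both (a) and (b) hold. Condition (a) is $M,a_i\models\langle\varepsilon\rangle j$ for every $j\in I$, i.e.\ $a_i\sim_\varepsilon a_j$. Condition (b) is $M,a_i\models[\eta]f_I$, i.e.\ every $c$ with $a_i\sim_\eta c$ lies in $F_I^M$.

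It then remains to link $\sim_r$ to $d_M$, which is the one step needing care. I would prove the auxiliary fact that for $r\in G$, $a\sim_r b$ iff $d_M(a,b)\le r$. The minimum in \eqref{eq:distance} exists because $G$ is finite and boundedness gives $a\sim_1 b$. If $a\sim_r b$, then $d_M(a,b)\le r$ by minimality. Conversely, $a\sim_{d_M(a,b)}b$ holds by definition, and threshold nesting lifts this to $a\sim_r b$. With this fact, (a) ranging over all $i,j\in I$ is exactly \eqref{eq:internal-cohesion}, since $a,b$ range over $F_I^M$ through the names. Condition (b) is the contrapositive of \eqref{eq:external-separation}: for $a\in F_I^M$ and $c\notin F_I^M$ we have $a\not\sim_\eta c$, i.e.\ $d_M(a,c)>\eta$.

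I expect no real obstacle, since everything is routine unfolding. The only point to flag is that $\varepsilon,\eta\in G$, so the auxiliary fact applies to them. Symmetry and triangle closure are not needed. Both directions of the biconditional then follow from the chain of equivalences above.
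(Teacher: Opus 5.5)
Your proposal is correct and takes essentially the same route as the paper: you unfold the $@_i\langle\varepsilon\rangle j$ and $@_i[\eta]f_I$ conjuncts into $\sim_\varepsilon$-edges within $F_I^M$ and the absence of $\sim_\eta$-edges leaving $F_I^M$. You also make explicit two points the paper leaves implicit: the equivalence $a\sim_r b$ iff $d_M(a,b)\le r$ for $r\in G$ (via finiteness of $G$, boundedness, and threshold nesting), and the independence of $B_I^{\varepsilon,\eta}$ from the evaluation point.
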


\begin{proof}
 If $V(i)=\{a\}$ and $V(j)=\{b\}$, then $@_i\langle\varepsilon\rangle j$ holds exactly when $a\sim_\varepsilon b$. Moreover, $@_i[\eta]f_I$ holds exactly when every $\eta$-neighbour of the unique agent in $V(i)$ satisfies one of the nominals in $I$. Because the nominal valuation induces a bijection from $\Omega$ onto $A$, this is equivalent to the absence of an $\eta$-edge from an agent in $F_I^M$ to an agent in $A\setminus F_I^M$.
\end{proof}

\subsection{Axiomatic system $H_0$}

We axiomatize static network models by a Hilbert-style system $H_0$ containing propositional tautologies, modus ponens, 
and the replacement rule: if $\vdash\varphi\leftrightarrow\psi$, so is $\vdash C[\varphi]\leftrightarrow C[\psi]$,
where $C[\cdot]$ is any one-place formula context.
Necessitation rule for each $r\in G$, together with the following schemes:

\begin{center}
\begin{tabular}{@{}ll@{}}
\hline
\multicolumn{2}{@{}l}{\textit{Threshold schemes}} \\[2pt]

$\displaystyle [r]\varphi\to\varphi$
& $(T_r)$ \\

$\displaystyle \varphi\to[r]\langle r\rangle\varphi$
& $(B_r)$ \\

$\displaystyle [s]\varphi\to[r]\varphi
\qquad(r\leq s)$
& $(\mathrm{Mon})$ \\

$\displaystyle [r\oplus s]\varphi\to[r][s]\varphi$
& $(\mathrm{Tri})$ \\

$\displaystyle @_i\langle1\rangle j$
& $(\mathrm{Bd})$ \\[4pt]
\hline
\multicolumn{2}{@{}l}{\textit{Naming schemes}} \\[2pt]

$\displaystyle \bigvee_{i\in\Omega}i$
& $(\mathrm{Name})$ \\

$\displaystyle
@_i\langle r\rangle\varphi
\leftrightarrow
\bigvee_{j\in\Omega}
\bigl(
@_i\langle r\rangle j
\land @_j\varphi
\bigr)$
& $(\mathrm{Paste}_r)$ \\[4pt]
\hline
\multicolumn{2}{@{}l}{\textit{Hybrid schemes}} \\[2pt]

$\displaystyle
@_i(\varphi\to\psi)
\to
(@_i\varphi\to@_i\psi)$
& $(K_{@})$ \\

$\displaystyle
\neg @_i\varphi
\leftrightarrow @_i\neg\varphi$
& $(\mathrm{Self})$ \\

$\displaystyle @_i i$
& $(\mathrm{Ref}_{@})$ \\

$\displaystyle
@_i@_j\varphi
\leftrightarrow @_j\varphi$
& $(\mathrm{Agree})$ \\

$\displaystyle
@_i(\varphi\land\psi)
\leftrightarrow
(@_i\varphi\land @_i\psi)$
& $(\mathrm{Conj}_{@})$ \\

$\displaystyle
i\to(\varphi\leftrightarrow @_i\varphi)$
& $(\mathrm{Locality})$ \\

$\displaystyle @_i\neg j
\qquad(i\neq j)$
& $(\mathrm{Dist})$ \\
\hline
\end{tabular}
\end{center}

 In addition, we have the rule $(\mathrm{Gen}_{@})$ from $\vdash\varphi$ to $\vdash @_i\varphi$. Indeed, replacement applied to $i\land\top\leftrightarrow i$, together with $(\mathrm{Conj}_{@})$ and $(\mathrm{Ref}_{@})$, gives $\vdash @_i\top$. If $\vdash\varphi$, replacement applied to $\varphi\leftrightarrow\top$ then gives $\vdash @_i\varphi$.

We define the following set of ground formulas:
\[
\mathsf{GFm}:=
\{@_ip:i\in\Omega,\ p\in P\}
\cup
\{@_i\langle r\rangle j:i,j\in\Omega,\ r\in G\}.
\]
In the target of the translation, each member of $\mathsf{GFm}$ is treated as a single propositional variable. Thus the @ and modal symbols occurring inside a ground formula are not recursively interpreted by the target propositional language.

For every evaluation nominal $i\in\Omega$, define $\tau_i$ recursively by
\begin{align*}
\tau_i(p)&=@_ip,
&
\tau_i(j)&=
\begin{cases}
\top,&i=j,\\
\bot,&i\neq j,
\end{cases}
\\
\tau_i(\neg\varphi)&=\neg\tau_i(\varphi),
&
\tau_i(\varphi\land\psi)&=
\tau_i(\varphi)\land\tau_i(\psi),
\\
\tau_i(@_j\varphi)&=\tau_j(\varphi),
&
\tau_i(\langle r\rangle\varphi)&=
\bigvee_{j\in\Omega}
\bigl(
@_i\langle r\rangle j
\land\tau_j(\varphi)
\bigr).
\end{align*}

For $\varphi\in\Lang_0$, put
$\rho_\varphi:=\bigvee_{i\in\Omega}(i\land\tau_i(\varphi))$.
Given a static network model $M$ with $V(i)=\{a_i\}$, let $v_M$
be the propositional valuation on $\mathsf{GFm}$ determined by
$v_M(@_ip)=1$ iff $M,a_i\models p$, and
$v_M(@_i\langle r\rangle j)=1$ iff $a_i\sim_r a_j$.

\begin{lemma}
\label{lem:grounding}
For every $\varphi\in\Lang_0$, the reduct $\rho_\varphi$ is a Boolean
combination of nominals and members of $\mathsf{GFm}$.
For every static network model $M$ and every $i\in\Omega$,
\begin{equation}
M,a_i\models\varphi
\quad\text{iff}\quad
v_M\models\tau_i(\varphi).
\label{eq:groundingtruth}
\end{equation}
Moreover,
\begin{equation}
\vdash\varphi\leftrightarrow\rho_\varphi.
\label{eq:provablegrounding}
\end{equation}
Both $\rho_\varphi$ and a derivation of this equivalence are
effectively computable from $\varphi$.
\end{lemma}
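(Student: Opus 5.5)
All three claims go by structural induction on $\varphi$, carried out simultaneously for all evaluation nominals $i\in\Omega$. For the first claim, an induction shows that every $\tau_i(\varphi)$ is a Boolean combination of members of $\mathsf{GFm}$ and the constants $\top,\bot$. The clauses for $p$ and $j$ produce ground formulas or constants. The Boolean clauses preserve the property, and so does the $@_j$ clause, since the induction hypothesis covers $\tau_j$. The $\langle r\rangle$ clause is a finite disjunction of conjunctions of a ground atom with $\tau_j(\varphi)$, which is finite because $\Omega$ is finite. Hence $\rho_\varphi$ is a Boolean combination of nominals and ground formulas. The definition of $\tau_i$ is an explicit recursion, so $\rho_\varphi$ is computable from $\varphi$.

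For \eqref{eq:groundingtruth} I use the same induction and compare the truth clauses with the definition of $v_M$. For $p$, both sides say $a_i\in V(p)$. For a nominal $j$, $M,a_i\models j$ iff $a_i=a_j$ iff $i=j$, because the naming map is a bijection; this matches $\tau_i(j)$. The Boolean cases are immediate. For $@_j\varphi$, truth at $a_i$ equals truth at $a_j$, which by the induction hypothesis is $v_M\models\tau_j(\varphi)$. For $\langle r\rangle\varphi$, every $b\in A$ equals $a_j$ for some $j$, by surjectivity of the naming map. So $M,a_i\models\langle r\rangle\varphi$ iff $a_i\sim_r a_j$ and $M,a_j\models\varphi$ for some $j$, and this is exactly the translated disjunction.

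The provable equivalence \eqref{eq:provablegrounding} is the main step. I would first prove, by induction on $\varphi$ for all $i$, the auxiliary claim $\vdash @_i\varphi\leftrightarrow\tau_i(\varphi)$. The cases go as follows.
\begin{itemize}
\item[] \emph{Atoms.} For $p$ the claim is trivial, since $\tau_i(p)=@_ip$.
\item[] \emph{Nominals.} For $j$ we need $\vdash @_ii\leftrightarrow\top$, which is $(\mathrm{Ref}_@)$, and $\vdash @_ij\leftrightarrow\bot$ for $i\neq j$, which is $(\mathrm{Dist})$ together with $(\mathrm{Self})$.
\item[] \emph{Negation and conjunction.} These use $(\mathrm{Self})$ and $(\mathrm{Conj}_@)$, combined with replacement and the induction hypothesis.
\item[] \emph{The $@_j$ case.} This uses $(\mathrm{Agree})$: $@_i@_j\varphi\leftrightarrow @_j\varphi$, and then the induction hypothesis for $j$.
\item[] \emph{The modal case.} $(\mathrm{Paste}_r)$ gives $@_i\langle r\rangle\varphi\leftrightarrow\bigvee_j(@_i\langle r\rangle j\land @_j\varphi)$. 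Replacing each $@_j\varphi$ by $\tau_j(\varphi)$ via the induction hypothesis and replacement gives $\tau_i(\langle r\rangle\varphi)$.
\end{itemize}
The delicate point is that replacement must be applied inside the inner conjunctions. This is allowed because the rule holds for arbitrary one-place contexts.

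Finally, I derive $\vdash\varphi\leftrightarrow\rho_\varphi$. By $(\mathrm{Locality})$, $\vdash i\to(\varphi\leftrightarrow @_i\varphi)$, so by the auxiliary claim $\vdash i\to(\varphi\leftrightarrow\tau_i(\varphi))$. Propositional reasoning then gives $\vdash(i\land\varphi)\leftrightarrow(i\land\tau_i(\varphi))$ for each $i$. Taking the disjunction over $i$ gives $\vdash\bigvee_i(i\land\varphi)\leftrightarrow\rho_\varphi$. By $(\mathrm{Name})$, $\vdash\varphi\leftrightarrow\bigvee_i(i\land\varphi)$, which finishes the argument. Every step above is an explicit axiom instance or rule application dictated by the syntax tree of $\varphi$, so the derivation is effectively computable as well.
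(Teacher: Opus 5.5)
Your proposal is correct and follows the paper's proof essentially step for step: you use the same simultaneous induction for the truth clause, the same auxiliary claim $\vdash @_i\varphi\leftrightarrow\tau_i(\varphi)$ with $(\mathrm{Paste}_r)$ handling the modal case, and $(\mathrm{Name})$ together with $(\mathrm{Locality})$ to pass to $\rho_\varphi$. You also spell out the nominal, Boolean, and $@_j$ cases that the paper leaves implicit, and the ``delicate'' replacement in the modal case is in fact plain propositional reasoning, because the surrounding context is purely Boolean.
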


\begin{proof}
We first prove \eqref{eq:groundingtruth} by induction on
$\varphi$, simultaneously for all $i\in\Omega$.
The atomic and nominal cases follow from the definitions of
$v_M$ and $\tau_i$. The Boolean and $@$ cases follow from
the induction hypothesis.

Suppose that $\varphi=\langle r\rangle\psi$.
Since every agent of $M$ is named, we have
\[
\begin{aligned}
M,a_i\models\langle r\rangle\psi
&\;\text{iff}\; \exists j\in\Omega\,
  \bigl(a_i\sim_r a_j \text{ and } M,a_j\models\psi\bigr)\\
&\;\text{iff}\; v_M\models
  \bigvee_{j\in\Omega}
  \bigl(@_i\langle r\rangle j\land\tau_j(\psi)\bigr)\\
&\;\text{iff}\; v_M\models\tau_i(\langle r\rangle\psi).
\end{aligned}
\]
The second equivalence uses the induction hypothesis and the
definition of $v_M$; the last uses the definition of $\tau_i$.
This proves \eqref{eq:groundingtruth}.

To prove \eqref{eq:provablegrounding}, we first show that
\[
\vdash @_i\varphi\leftrightarrow\tau_i(\varphi)
\qquad\text{for every }i\in\Omega.
\tag{*}
\]
We again proceed by induction on $\varphi$, simultaneously
for all $i\in\Omega$, treating members of $\mathsf{GFm}$ as
propositional atoms. We give only the modal case.
By $(\mathrm{Paste}_r)$,
$\vdash @_i\langle r\rangle\psi
\leftrightarrow
\bigvee_{j\in\Omega}
\bigl(@_i\langle r\rangle j\land @_j\psi\bigr)$.

By the induction hypothesis,
$\vdash @_j\psi\leftrightarrow\tau_j(\psi)$ for every
$j\in\Omega$. Replacing provably equivalent subformulas and
using the definition of $\tau_i$, we obtain
$\vdash @_i\langle r\rangle\psi
\leftrightarrow\tau_i(\langle r\rangle\psi)$. This completes the induction.

Finally, $(\mathrm{Name})$, $(\mathrm{Locality})$, and
propositional reasoning give $\vdash \varphi\leftrightarrow
\bigvee_{i\in\Omega}\bigl(i\land @_i\varphi\bigr)$. Using $(*)$ and the definition
$\rho_\varphi=\bigvee_{i\in\Omega}(i\land\tau_i(\varphi))$,
we obtain $\vdash\varphi\leftrightarrow\rho_\varphi$,
as required.
\end{proof}

To encode the frame conditions propositionally, write $R^r_{ij}:=@_i\langle r\rangle j$.
Let $C_{\mathrm{t}}$ be the following formula, where nominal and threshold indices range over $\Omega$ and $G$,
\begin{align*}
C_{\mathrm{t}}:={}&
(\bigvee_i i)
\land\bigwedge_{i\neq j}\neg(i\land j)
\land\bigwedge_{i,r}R^r_{ii}
\land\bigwedge_{i,j,r}(R^r_{ij}\leftrightarrow R^r_{ji})
\\[-2pt]
&\land\bigwedge_{\substack{i,j,r,s\\r\leq s}}
(R^r_{ij}\to R^s_{ij})
\land\bigwedge_{i,j,k,r,s}
\bigl((R^r_{ij}\land R^s_{jk})\to R^{r\oplus s}_{ik}\bigr)
\land\bigwedge_{i,j}R^1_{ij}.
\end{align*}
The naming conjuncts follow from $(\mathrm{Name})$, $(\mathrm{Dist})$, and $(\mathrm{Locality})$. The remaining conjuncts are the ground forms of $(T_r)$, $(B_r)$, $(\mathrm{Mon})$, $(\mathrm{Tri})$, and $(\mathrm{Bd})$. Using $(\mathrm{Gen}_{@})$ and Lemma~\ref{lem:grounding}, we obtain $\vdash C_{\mathrm{t}}$.

\begin{samepage}
\begin{theorem}[Soundness and strong completeness of $H_0$]
\label{thm:static-completeness}
For every $\Gamma\cup\{\varphi\}\subseteq\Lang_0$,
\[
\Gamma\vdash\varphi
\quad\text{iff}\quad
\Gamma\models\varphi
\]
over static network models.
\end{theorem}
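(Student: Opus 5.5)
Soundness is checked scheme by scheme, and completeness is reduced to propositional compactness through the grounding Lemma~\ref{lem:grounding} and the frame formula $C_{\mathrm{t}}$.

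\textbf{Soundness.} We check each axiom and rule against static network models.
\begin{itemize}
\item $(T_r)$, $(B_r)$, $(\mathrm{Mon})$ and $(\mathrm{Tri})$ correspond to reflexivity, symmetry, threshold nesting and triangle closure. For $(\mathrm{Tri})$: if $a\sim_r b\sim_s c$, then $a\sim_{r\oplus s}c$.
\item $(\mathrm{Bd})$ is boundedness.
\item $(\mathrm{Name})$ and $(\mathrm{Dist})$ hold because the naming map is a bijection onto $A$.
\item $(\mathrm{Paste}_r)$ holds because every agent is named.
\item The hybrid schemes are the standard validities for $@$ with singleton nominals.
\item Modus ponens, necessitation and replacement preserve validity. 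Replacement is sound because truth is compositional.
\end{itemize}
Strong soundness then follows in the usual way: if $\Gamma\vdash\varphi$, some finite $\Gamma_0\subseteq\Gamma$ gives $\vdash\bigwedge\Gamma_0\to\varphi$. This requires the convention that necessitation applies only to theorems, so local consequence is preserved.

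\textbf{Completeness: reduction to propositional logic.} Suppose $\Gamma\nvdash\varphi$. Treat each member of $\mathsf{GFm}$ and each nominal as a propositional variable, and consider
\[
\Delta:=\{\rho_\gamma:\gamma\in\Gamma\}\cup\{\neg\rho_\varphi, C_{\mathrm{t}}\}.
\]
\emph{Claim:} $\Delta$ is propositionally satisfiable. Suppose not. By compactness, some finite $\gamma_1,\dots,\gamma_n\in\Gamma$ give a tautology $(C_{\mathrm{t}}\land\bigwedge_k\rho_{\gamma_k})\to\rho_\varphi$. Every propositional tautology over these atoms is an instance of a tautology in $\Lang_0$, so it is provable. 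By \eqref{eq:provablegrounding}, $\vdash\gamma_k\leftrightarrow\rho_{\gamma_k}$ and $\vdash\varphi\leftrightarrow\rho_\varphi$, and we also have $\vdash C_{\mathrm{t}}$. Together these give $\Gamma\vdash\varphi$, a contradiction.

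Fix a satisfying valuation $v$.

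\textbf{Completeness: model construction.} Build a model $M$ from $v$.
\begin{itemize}
\item The naming conjuncts of $C_{\mathrm{t}}$ give exactly one nominal $i_0$ true under $v$.
\item Take $A:=\Omega$ and $V(i)=\{i\}$, so the naming map is the identity and is bijective.
\item For $p\in P$, set $V(p)=\{i:v(@_ip)=1\}$.
\item Set $i\sim_r j$ iff $v(R^r_{ij})=1$.
\end{itemize}
The remaining conjuncts of $C_{\mathrm{t}}$ are exactly reflexivity, symmetry, nesting, triangle closure and boundedness, so $M$ is a static network model. Since $A$ is finite and nonempty, we need $\Omega\neq\varnothing$, which holds because $|\Omega|=|A|$ is positive. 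By construction $v_M$ agrees with $v$ on $\mathsf{GFm}$.

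By \eqref{eq:groundingtruth}, $M,i_0\models\psi$ iff $v\models\tau_{i_0}(\psi)$. Since $v$ makes $i_0$ true and every other nominal false, $v\models\rho_\psi$ iff $v\models\tau_{i_0}(\psi)$. Hence $M,i_0\models\Gamma$ and $M,i_0\not\models\varphi$, so $\Gamma\not\models\varphi$.

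\textbf{Main obstacle.} The delicate step is the compactness transfer. One must make sure that a propositional tautology over the ground atoms and nominals, with each ground formula read as a single variable, becomes a genuine $H_0$-theorem when the ground formulas are reinstated. This holds because substitution instances of tautologies are axioms of $H_0$. One must also make sure that $\vdash C_{\mathrm{t}}$ and \eqref{eq:provablegrounding} are used only through finitely many premises. I would state this uniform-substitution point explicitly before combining the pieces.
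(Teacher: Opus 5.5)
Your proof is correct and follows essentially the same route as the paper: you check soundness scheme by scheme (with theorem-only necessitation), then obtain completeness from propositional compactness applied to $\Delta=\{\rho_\gamma:\gamma\in\Gamma\}\cup\{C_{\mathrm t},\neg\rho_\varphi\}$, read a static network model off a satisfying valuation, and evaluate at the unique nominal made true. The only cosmetic difference is that you take the domain to be $\Omega$ itself, whereas the paper transports the construction along a bijection $\nu:\Omega\to A$ onto the given agent set; since truth is invariant under such renaming, nothing changes.
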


\begin{proof}
For soundness, the frame conditions validate the threshold schemes,
while bijective naming and the satisfaction clause validate the naming
and hybrid schemes. Modus ponens preserves local consequence;
theorem-only necessitation, $(\mathrm{Gen}_{@})$, and replacement
preserve validity, the last by induction on formula contexts.

For strong completeness, suppose $\Gamma\nvdash_{H_0}\varphi$.
By Lemma~\ref{lem:grounding},
$\vdash\chi\leftrightarrow\rho_\chi$ for every $\chi$.
Treat nominals and ground formulas as propositional variables and put
$\Delta:=\{\rho_\gamma:\gamma\in\Gamma\}
\cup\{C_{\mathrm t},\neg\rho_\varphi\}$.
This set is propositionally satisfiable: otherwise compactness gives
a finite unsatisfiable subset, and propositional completeness,
$\vdash C_{\mathrm t}$, and the grounding equivalences derive
$\varphi$ from finitely many members of $\Gamma$, a contradiction.

Let $u$ satisfy $\Delta$, extending it arbitrarily to unused
variables. The naming conjuncts select a unique $k\in\Omega$ with
$u(k)=1$. Choose a bijection $\nu:\Omega\to A$ and set
$V(i)=\{\nu(i)\}$,
$\nu(i)\in V(p)$ iff $u(@_ip)=1$, and
$\nu(i)\sim_r\nu(j)$ iff $u(R^r_{ij})=1$.
The relational conjuncts of $C_{\mathrm t}$ ensure all frame
conditions, so this defines a static network model $M$ whose induced
ground valuation agrees with $u$. Uniqueness of $k$ and
Lemma~\ref{lem:grounding} give, for every $\chi$,
\[
u\models\rho_\chi
\quad\text{iff}\quad u\models\tau_k(\chi)
\quad\text{iff}\quad M,\nu(k)\models\chi.
\]
Hence $M,\nu(k)\models\Gamma$ and
$M,\nu(k)\not\models\varphi$, as required.
The argument permits arbitrary $\Gamma$ and an infinite $P$;
the fixed finite naming set ensures $|A|=|\Omega|$.
\end{proof}
\end{samepage}

\begin{corollary}[Decidability of $\Lang_0$]
\label{cor:static-decidability}
Satisfiability and validity for $\Lang_0$ over static network models
are decidable.
\end{corollary}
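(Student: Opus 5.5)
To decide the satisfiability of a formula $\varphi\in\Lang_0$, we reduce the question to propositional satisfiability through the grounding of Lemma~\ref{lem:grounding}. Given $\varphi$, we first compute the reduct $\rho_\varphi$; the lemma says this computation is effective. The reduct is a Boolean combination of nominals and members of $\mathsf{GFm}$. Only finitely many of these are relevant: the nominals (recall that $\Omega$ is finite), the formulas $R^r_{ij}$ for $i,j\in\Omega$ and $r\in G$ (finitely many, since $G$ is finite), and the ground atoms $@_ip$ for the finitely many letters $p$ that occur in $\varphi$. Restricting the relevant letters to those of $\varphi$ is what lets us handle a possibly infinite $P$. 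We then claim that $\varphi$ is satisfiable over static network models iff $C_{\mathrm t}\land\rho_\varphi$ is propositionally satisfiable, treating nominals and ground formulas as variables. The right-hand side is checked by evaluating the formula under every assignment to its finitely many variables.

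For the left-to-right direction of the claim, suppose $M,a\models\varphi$, and let $k$ be the nominal with $V(k)=\{a\}$. Take the valuation $v_M$ and extend it by setting nominal $k$ true and every other nominal false. Then $C_{\mathrm t}$ holds, because its naming conjuncts reflect the uniqueness of $k$ and its relational conjuncts are exactly the frame conditions of $M$. Moreover $\rho_\varphi$ reduces to $\tau_k(\varphi)$, which is true by \eqref{eq:groundingtruth}.

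For the right-to-left direction, we reuse the model construction from the completeness proof of Theorem~\ref{thm:static-completeness}. From a satisfying assignment $u$, the naming conjuncts single out a unique $k$ with $u(k)=1$, and the relational conjuncts guarantee the frame conditions for the relations defined by $u$. This yields a model $M$ with $M,\nu(k)\models\varphi$. Atoms that do not occur in $\varphi$ are given an empty extension.

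Validity then follows by duality: $\varphi$ is valid iff $\neg\varphi$ is unsatisfiable. Alternatively, by Theorem~\ref{thm:static-completeness} and the fact that $\vdash C_{\mathrm t}$, validity of $\varphi$ amounts to $C_{\mathrm t}\to\rho_\varphi$ being a tautology.

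The main point needing care is the finiteness of the variable set. This relies on $|\Omega|=|A|$ being fixed and finite, and on $G$ being finite, which together make $C_{\mathrm t}$ a single finite formula. It also relies on the fact that only the atoms of $\varphi$ appear in $\rho_\varphi$, which holds because $\tau_i$ introduces $@_ip$ only for letters $p$ already occurring in $\varphi$.
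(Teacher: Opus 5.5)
Your proposal is correct and follows essentially the paper's proof: reduce satisfiability of $\varphi$ to propositional satisfiability of $C_{\mathrm t}\land\rho_\varphi$ over the finitely many nominal and ground variables, using the grounding lemma and the countermodel construction from the completeness proof, and then decide by truth tables. You spell out both directions of this equivalence more explicitly than the paper does, and your alternative validity test (that $C_{\mathrm t}\to\rho_\varphi$ is a tautology) is exactly the paper's condition that $C_{\mathrm t}\land\neg\rho_\varphi$ be propositionally unsatisfiable.
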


\begin{proof}
Given \(\varphi\), we construct \(\rho_\varphi\) and
\(C_{\mathrm t}\). Both formulas are finite, since \(\Omega\) and \(G\)
are finite and only finitely many atoms occur in \(\varphi\), even when
\(P\) is infinite. By the grounding lemma and the preceding countermodel
construction, \(\varphi\) is satisfiable over static network models if
and only if \(C_{\mathrm t}\land\rho_\varphi\) is 
satisfiable. Similarly, \(\varphi\) is valid if and only if
\(C_{\mathrm t}\land\neg\rho_\varphi\) is propositionally
unsatisfiable. Since these are finite propositional formulas, both
questions are decidable by truth-table enumeration.
\end{proof}

\section{Dynamic characterization of filter bubbles}

A static network model represents the attitudes of the agents and the threshold relations among them at a given stage. The selective delivery of information may change those attitudes, thereby preserving or dissolving an existing filter bubble or forming a new one. A feed model represents such delivery and its possible attitudinal effects. Applying it to a static network model gives an updated model that records the resulting attitudes and relations.

\subsection{Attitude profiles and network models}

To make the dependence on attitudes explicit, let
$V:P\cup\Omega\to\wp(A)$ be a valuation. The \emph{attitude profile}
induced by $V$ at $a\in A$ is
\begin{equation}
S_V(a):=\{p\in P:a\in V(p)\}.
\label{eq:profile}
\end{equation}
If $V$ is the valuation of a model $M$, write $S_M(a)$ for $S_V(a)$.

The comparison between agents is mediated by their attitude profiles.
Fix a normalized pseudometric
\[
\delta:\wp(P)\times\wp(P)\longrightarrow[0,1]
\]
such that, for all $S,T,U\subseteq P$,
\[
\delta(S,S)=0,\qquad
\delta(S,T)=\delta(T,S),\qquad
\delta(S,U)\leq\delta(S,T)+\delta(T,U).
\]
For agents $a,b\in A$, the value
$\delta(S_V(a),S_V(b))$ is their profile-based distance. 

A profile-based distance need not belong to $G$, whereas the
 $\langle r\rangle\varphi$ modalities are indexed by thresholds in $G$. We adapt the
distance to this scale by rounding it upward to the least grid value
not smaller than it. For $x\in[0,1]$, define
\begin{equation}
\lceil x\rceil_G
:=\min\{r\in G:x\leq r\}
=\frac{\lceil mx\rceil}{m}
\qquad(x\in[0,1]).
\label{eq:grid-ceiling}
\end{equation}
For every $x\in[0,1]$ and $r\in G$,
$\lceil x\rceil_G\leq r$ iff $x\leq r$. Hence upward rounding adapts
profile comparison to the scale $G$ without changing any threshold
comparison expressible in the language.

We now make this profile-based comparison the relational component of
the model.

\begin{definition}[Network model]
\label{def:network-model}
A \emph{network model}, relative to $\delta$, is a tuple
\[
N=\bigl(A,(\sim_r)_{r\in G},V\bigr),
\]
where $A$ is finite and nonempty. For every $i\in\Omega$, $V(i)$ is a
singleton, and the map sending $i$ to the unique member of $V(i)$ is a
bijection from $\Omega$ onto $A$.
For all $a,b\in A$ and $r\in G$, the threshold relations are defined
from the agents' profiles by
\begin{equation}
a\sim_r b
\quad\text{iff}\quad
\left\lceil
\delta\bigl(S_N(a),S_N(b)\bigr)
\right\rceil_G
\leq r.
\label{eq:network-relation}
\end{equation}
\end{definition}


\begin{proposition}
\label{prop:network-static}
Every network model is a static network model.
\end{proposition}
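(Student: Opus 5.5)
We verify the five frame conditions of the static network model definition directly from the defining clause \eqref{eq:network-relation}. The naming requirements are shared verbatim by both definitions, so only the threshold conditions need checking. Throughout, write $D(a,b):=\lceil\delta(S_N(a),S_N(b))\rceil_G$. We use the observation recorded after \eqref{eq:grid-ceiling}: for $x\in[0,1]$ and $r\in G$, $\lceil x\rceil_G\leq r$ iff $x\leq r$. Hence $a\sim_r b$ iff $\delta(S_N(a),S_N(b))\leq r$, and we may reason with the raw pseudometric values.

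The first four conditions are immediate.
\begin{itemize}
\item[] \emph{Reflexivity:} $\delta(S,S)=0\leq r$ for every $r\in G$.
\item[] \emph{Symmetry:} this follows from $\delta(S,T)=\delta(T,S)$.
\item[] \emph{Boundedness:} $\delta$ takes values in $[0,1]$, so $\delta(S_N(a),S_N(b))\leq 1$, and $1\in G$.
\item[] \emph{Threshold nesting:} if $\delta(S_N(a),S_N(b))\leq r\leq s$, then $a\sim_s b$.
\end{itemize}

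The only step needing care is triangle closure, because of the truncation in $r\oplus s=\min\{1,r+s\}$. Suppose $a\sim_r b$ and $b\sim_s c$, so $\delta(S_N(a),S_N(b))\leq r$ and $\delta(S_N(b),S_N(c))\leq s$. The triangle inequality gives $\delta(S_N(a),S_N(c))\leq r+s$. Since $\delta\leq 1$ as well, we get $\delta(S_N(a),S_N(c))\leq\min\{1,r+s\}=r\oplus s$. Note that $r\oplus s\in G$, because $G$ consists of the multiples of $1/m$ in $[0,1]$ and is closed under truncated addition. Therefore $a\sim_{r\oplus s}c$ by the ceiling equivalence.

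Triangle closure is the main (mild) obstacle. One must use the ceiling equivalence rather than try to bound $D(a,c)\leq D(a,b)+D(b,c)$ directly, although that bound also holds. One must also note that the cap at $1$ is harmless because $\delta$ is normalized.
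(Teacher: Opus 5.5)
Your proof is correct and follows essentially the same route as the paper. Naming is inherited, and reflexivity, symmetry, boundedness and nesting come from the pseudometric axioms, normalization and the order on $G$. Triangle closure goes through $x\leq\lceil x\rceil_G$, the triangle inequality, normalization to reach $r\oplus s$, and closure of $G$ under $\oplus$. The only cosmetic difference is that you convert every relation to a raw $\delta$-bound at the start via the equivalence $\lceil x\rceil_G\leq r$ iff $x\leq r$, whereas the paper uses the two directions of that fact separately inside the triangle-closure step.
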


\begin{proof}
The nominal valuation satisfies the naming condition required of a
static model. Reflexivity and symmetry follow from the corresponding
properties of $\delta$, boundedness from normalization, and nesting
from the order on $G$.

For triangle closure, suppose that $a\sim_r b$ and $b\sim_s c$.
Since $x\leq\lceil x\rceil_G$ for every $x\in[0,1]$,
equation~\eqref{eq:network-relation} gives
$\delta(S_N(a),S_N(b))\leq r$ and
$\delta(S_N(b),S_N(c))\leq s$. The triangle inequality and
normalization yield
\[
\delta(S_N(a),S_N(c))
\leq\min\{1,r+s\}=r\oplus s.
\]
Since $r\oplus s\in G$, upward rounding gives
$\lceil\delta(S_N(a),S_N(c))\rceil_G\leq r\oplus s$.
Thus $a\sim_{r\oplus s}c$.
\end{proof}

Network models are therefore $\delta$-generated static models. The
inclusion is generally proper, since a static model may interpret its
primitive threshold relations independently of its attitude
valuation. 

\subsection{Feed models and updates}


Preconditions directly specify the platform's allocation of information feeds, while postconditions specify how the selected event determines an agent's updated attitudes based on the state of the network model. This separates the platform's allocation rule from the selected event's effect on the agent's attitudes.


\begin{definition}[Feed model]
A feed model is a triple
\[
\Act=(E,\pre,\post),
\]
where $E$ is a finite nonempty set of events and
\[
\pre:E\longrightarrow\Lang_0,
\qquad
\post:E\times P\longrightarrow\Lang_0
\]
are total functions satisfying
\begin{align*}
\models \bigvee_{e\in E}\pre(e),
\qquad
\models \neg\bigl(\pre(e)\land\pre(f)\bigr)
\quad(e,f\in E,\ e\ne f).
\end{align*}
\end{definition}

For an input network model $M$, the event $e$ is selected at $a$
exactly when
\[
M,a\models\pre(e).
\]
If $e$ is selected at $a$, the truth value of $\post(e,p)$ at $a$ in
$M$ becomes the new value of $p$. All postconditions are evaluated in
the input model before any new value is assigned, so the update of the
attitude valuation is simultaneous.

The two validity conditions make the preconditions jointly exhaustive
and pairwise incompatible over network models. Consequently, exactly
one event is selected at each agent, although the same event may be
selected at several agents. Since no accessibility relations are
imposed on $E$, the present framework does not represent uncertainty
about which event is selected
\parencite{baltag1999,ditmarsch2007}.

Let $M$ be a network model and $\Act$ a feed model. For every $a\in A$,
write $e_a$ for the unique event such that
$M,a\models\pre(e_a)$.

\begin{definition}
The update of $M$ by $\Act$ is
\[
M\otimes\Act=
\bigl(A,(\sim_r^\otimes)_{r\in G},V^\otimes\bigr).
\]
For $p\in P$, the updated propositional valuation is
\begin{equation}
a\in V^\otimes(p)
\quad\text{iff}\quad
M,a\models\post(e_a,p),
\label{eq:updatevaluation}
\end{equation}
for every nominal $i\in\Omega$,
\begin{equation}
V^\otimes(i)=V(i),
\label{eq:updatenominal}
\end{equation}
and for all $a,b\in A$ and $r\in G$,
\begin{equation}
a\sim_r^\otimes b
\quad\text{iff}\quad
\left\lceil\delta\bigl(S_{V^\otimes}(a),S_{V^\otimes}(b)\bigr)
\right\rceil_G\leq r.
\label{eq:updaterelation}
\end{equation}
\end{definition}

\begin{proposition}
\label{prop:update-closure}
If $M$ is a network model and $\Act$ is a feed model, then
$M\otimes\Act$ is a network model.
\end{proposition}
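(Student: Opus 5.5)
The plan is to check the clauses of Definition~\ref{def:network-model} one at a time for the tuple $M\otimes\Act=\bigl(A,(\sim_r^\otimes)_{r\in G},V^\otimes\bigr)$. The agent set $A$ is unchanged, so it is still finite and nonempty.

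First we confirm that $V^\otimes$ is a well-defined valuation $P\cup\Omega\to\wp(A)$. This needs $e_a$ to be well defined for every $a\in A$. Proposition~\ref{prop:network-static} shows that $M$ is a static network model, and the validity conditions on $\Act$ range over all such models. Hence $M,a\models\bigvee_{e\in E}\pre(e)$, and $M,a\not\models\pre(e)\land\pre(f)$ for $e\neq f$. So exactly one event $e_a$ is selected at $a$. Each $\post(e_a,p)$ is a formula of $\Lang_0$, so its truth at $a$ in $M$ is determined, and clause~\eqref{eq:updatevaluation} defines $V^\otimes(p)\subseteq A$ for every $p\in P$.

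Second, we check the naming condition. By~\eqref{eq:updatenominal}, $V^\otimes(i)=V(i)$ for each $i\in\Omega$. So each $V^\otimes(i)$ is a singleton, and the induced map $\Omega\to A$ is the same bijection as in $M$.

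Third, we check the relational clause. Since $S_{N}(a)$ for $N=M\otimes\Act$ is by definition $S_{V^\otimes}(a)$, clause~\eqref{eq:updaterelation} is literally~\eqref{eq:network-relation} instantiated for the new valuation. So $\sim_r^\otimes$ is exactly the profile-generated relation that the definition demands. This gives that $M\otimes\Act$ is a network model relative to the same $\delta$. By Proposition~\ref{prop:network-static}, it is then also a static network model, which the subsequent semantics will need.

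The only step with any substance is the first, namely the well-definedness of $e_a$. The point to make explicit there is that the validity conditions on feed models are stated over static network models, and Proposition~\ref{prop:network-static} guarantees that the input $M$ falls within their scope. Everything else is unwinding definitions.
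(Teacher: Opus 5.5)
Your proof is correct and follows the paper's argument exactly: exhaustivity and incompatibility make each $e_a$ unique, so the valuation is total; \eqref{eq:updatenominal} preserves the naming bijection; and \eqref{eq:updaterelation} is the generating condition \eqref{eq:network-relation} for $V^\otimes$. Your explicit remark that the validity conditions range over static network models, so that Proposition~\ref{prop:network-static} is what places $M$ within their scope, makes precise a detail the paper leaves implicit.
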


\begin{proof}
Exhaustivity and incompatibility make each $e_a$ unique, so
equation~\eqref{eq:updatevaluation} defines a total valuation on $P$.
Equation~\eqref{eq:updatenominal} preserves the naming bijection.
Finally, equation~\eqref{eq:updaterelation} is precisely the generating
condition~\eqref{eq:network-relation} for $V^\otimes$. Hence
$M\otimes\Act$ is a network model.
\end{proof}



\section{The dynamic logic of filter bubbles}

\subsection{Language and semantics}

Let $\Sigma$ be a collection of feed models, $P$ be a nonempty finite set of propositional atoms, and $\Omega$ be a finite set of nominals, with $P\cap\Omega=\varnothing$ and $|\Omega|=|A|$. The dynamic language
$\Lang_D$ is generated by
\begin{equation}
\varphi::=p\mid i\mid\neg\varphi\mid(\varphi\land\psi)
\mid\langle r\rangle\varphi\mid@_i\varphi\mid[\Act]\varphi
\end{equation}
where $p\in P$, $i\in\Omega$, $r\in G$, and $\Act\in\Sigma$.
Formulas are evaluated at agents of network models. The static clauses
are inherited from Section~2.2, and the dynamic clause is
\begin{equation}
M,a\models[\Act]\varphi
\quad\text{iff}\quad
M\otimes\Act,a\models\varphi
\label{eq:dynamicsemantics}
\end{equation}
Let $\Gamma\cup\{\varphi\}$ be a set formulas in $\Lang_D$, and we use
$\Gamma\models\varphi$ to denote local
consequence relation.

\subsection{Profile formulas  and updated relation}

For
$i\in\Omega$ and $S\subseteq P$, we define the profile formula $\Prof_S^{\Act}(i)$  of the agent named by $i$ after the update through the feed mode $\Act=(E,\pre,\post)$. 
\begin{equation}
\Prof_S^{\Act}(i):=
@_i\bigwedge_{e\in E}
\left(
\pre(e)\rightarrow
\left(
\bigwedge_{p\in S}\post(e,p)
\land
\bigwedge_{p\in P\setminus S}\neg\post(e,p)
\right)
\right)
\label{eq:profileformula}
\end{equation}
As usual, an empty conjunction is identified with $\top$. Because the preconditions select a unique event, this formula requires
precisely the atoms in $S$, and no others, to be true after the update
at the agent named by $i$.

For $i,j\in\Omega$ and $r\in G$, we define the formula $\Sim_r^{\Act}(i,j)$ of updated relation between the agents named by $i$ and $j$ after the update through the feed mode $\Act=(E,\pre,\post)$ in the following.
\begin{equation}
\Sim_r^{\Act}(i,j)
:=
\bigvee_{\substack{S,T\subseteq P\\
\lceil\delta(S,T)\rceil_G\leq r}}
\bigl(\Prof_S^{\Act}(i)\land\Prof_T^{\Act}(j)\bigr)
\label{eq:updatedsimilarity}
\end{equation}

\begin{lemma}
\label{lem:updated-similarity}
For every network model $M$, feed model $\Act$, nominals $i,j\in\Omega$, and threshold $r\in G$, if $V(i)=\{a\}$ and $V(j)=\{b\}$, then
\[
M\models\Sim_r^{\Act}(i,j)
\quad\text{iff}\quad
a\sim_r^\otimes b.
\]
\end{lemma}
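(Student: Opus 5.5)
The plan is to first pin down the meaning of the profile formula $\Prof_S^{\Act}(i)$. Then the disjunction in $\Sim_r^{\Act}(i,j)$ collapses to a single disjunct, namely the one indexed by the actual updated profiles of $a$ and $b$. Throughout I use that $P$ is finite in the dynamic setting, so all conjunctions over $P\setminus S$ and disjunctions over pairs $S,T\subseteq P$ are finite and the formulas are well-formed members of $\Lang_0$.

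\emph{Step 1: the profile formula.} I will show that $M\models\Prof_S^{\Act}(i)$ iff $S_{V^\otimes}(a)=S$, where $V(i)=\{a\}$. Since $\Prof_S^{\Act}(i)$ has the form $@_i\chi$, its truth is independent of the evaluation point. So $M\models\Prof_S^{\Act}(i)$ iff $M,a\models\chi$.

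By exhaustivity and incompatibility of the preconditions (valid over network models, and $M$ is one), $e_a$ is the unique event with $M,a\models\pre(e_a)$. Every other conjunct $\pre(e)\to\cdots$ with $e\neq e_a$ is therefore vacuously true at $a$. Hence $M,a\models\chi$ iff $M,a\models\post(e_a,p)$ for all $p\in S$ and $M,a\not\models\post(e_a,p)$ for all $p\in P\setminus S$.

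By \eqref{eq:updatevaluation}, this says exactly that $a\in V^\otimes(p)$ iff $p\in S$ for each $p\in P$. In other words, $S_{V^\otimes}(a)=S$. The same argument applies to $j$ and $b$.

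\emph{Step 2: collapse of the disjunction.} Put $S_0:=S_{V^\otimes}(a)$ and $T_0:=S_{V^\otimes}(b)$. Again $\Sim_r^{\Act}(i,j)$ is a Boolean combination of $@$-formulas, so its truth is the same at every point. Thus $M\models\Sim_r^{\Act}(i,j)$ iff some pair $(S,T)$ with $\lceil\delta(S,T)\rceil_G\le r$ satisfies $M\models\Prof_S^{\Act}(i)\land\Prof_T^{\Act}(j)$.

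By Step 1, the only pair satisfying $M\models\Prof_S^{\Act}(i)\land\Prof_T^{\Act}(j)$ is $(S_0,T_0)$. So $M\models\Sim_r^{\Act}(i,j)$ iff $\lceil\delta(S_0,T_0)\rceil_G\le r$. By \eqref{eq:updaterelation}, that condition is exactly $a\sim_r^\otimes b$.

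\emph{Main obstacle.} The only delicate point is Step 1, specifically the use of uniqueness of the selected event at $a$. The validity conditions in the definition of a feed model are stated with $\models$ over the relevant class of models. I must make sure they apply to $M$, which holds because network models are static network models by Proposition~\ref{prop:network-static}. Once this is in place, the rest is unfolding definitions.
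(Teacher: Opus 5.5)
Your proposal is correct and follows essentially the same route as the paper's proof. You first show $M\models\Prof_S^{\Act}(i)$ iff $S=S_{V^\otimes}(a)$, using the uniquely selected event $e_a$ and \eqref{eq:updatevaluation}. You then observe that only the disjunct indexed by the actual updated profiles can witness $\Sim_r^{\Act}(i,j)$, so \eqref{eq:updaterelation} gives the equivalence; your extra remarks (uniform truth of $@$-formulas, and applicability of the feed-model validity conditions to $M$ via Proposition~\ref{prop:network-static}) only make explicit what the paper leaves tacit.
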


\begin{proof}
Let $e_a$ be the unique event selected at $a$. Since $V(i)=\{a\}$, then we have
\[
M\models\Prof_S^{\Act}(i)
\quad\text{iff}\quad
M,a\models\left(
\bigwedge_{p\in S}\post(e_a,p)
\land
\bigwedge_{p\in P\setminus S}\neg\post(e_a,p)
\right).
\]
By equation~\eqref{eq:updatevaluation}, the first conjunction gives
$S\subseteq S_{V^\otimes}(a)$ and the second gives
$S_{V^\otimes}(a)\subseteq S$. Hence
\[
M\models\Prof_S^{\Act}(i)
\quad\text{iff}\quad
S=S_{V^\otimes}(a).
\]
The same argument applies to $j$ and $b$. Consequently, only the pair
$(S_{V^\otimes}(a),S_{V^\otimes}(b))$ can witness
equation~\eqref{eq:updatedsimilarity}.
Equation~\eqref{eq:updaterelation} now gives the required equivalence.
\end{proof}
\subsection{Filter bubble formation, persistence, and dissolution}

The static formula $B_I^{\varepsilon,\eta}$ describes a bubble-shaped
configuration at some stage. To classify the effect of a fixed feed
model, we compare its truth before and after the update. Fix
$\Act\in\Sigma$, $\varnothing\neq I\subsetneq\Omega$, and
$\varepsilon,\eta\in G$ with $\varepsilon<\eta$, and define
\begin{align}
\FBForm_{\Act}(I;\varepsilon,\eta)
&:=\neg B_I^{\varepsilon,\eta}
\land[\Act]B_I^{\varepsilon,\eta},
\label{eq:formation}\\
\FBPres_{\Act}(I;\varepsilon,\eta)
&:=B_I^{\varepsilon,\eta}
\land[\Act]B_I^{\varepsilon,\eta},
\label{eq:persistence}\\
\FBDiss_{\Act}(I;\varepsilon,\eta)
&:=B_I^{\varepsilon,\eta}
\land[\Act]\neg B_I^{\varepsilon,\eta}.
\label{eq:dissolution}
\end{align}
These formulas distinguish formation, persistence, and dissolution as
three transition patterns. Formation
and dissolution record a switch after the update, whereas persistence
records its preservation. All three notions can already be expressed in static language, we show the key proposition as follows.

\begin{proposition}
\label{prop:output-bubble}
For every network model $M$,
$\Act\in\Sigma$, $\varnothing\neq I\subsetneq\Omega$, and
$\varepsilon,\eta\in G$ with $\varepsilon<\eta$,
\begin{align*}
M\models[\Act]B_I^{\varepsilon,\eta}
\quad\text{iff}\quad
M\models{}&
\bigwedge_{i,j\in I}\Sim_\varepsilon^{\Act}(i,j)
\land
\bigwedge_{\substack{i\in I\\j\in\Omega\setminus I}}
\neg\Sim_\eta^{\Act}(i,j).
\end{align*}
\end{proposition}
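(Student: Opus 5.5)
The plan is to reduce both sides to statements about the updated threshold relations $\sim^\otimes_r$ and then compare them pointwise. Write $V(i)=\{a_i\}$ for $i\in\Omega$. The left-hand side is handled first. By the dynamic clause \eqref{eq:dynamicsemantics}, $M\models[\Act]B_I^{\varepsilon,\eta}$ holds iff $M\otimes\Act,a\models B_I^{\varepsilon,\eta}$ for every $a\in A$, that is, iff $M\otimes\Act\models B_I^{\varepsilon,\eta}$. By Proposition~\ref{prop:update-closure}, $M\otimes\Act$ is a network model, and hence, by Proposition~\ref{prop:network-static}, a static network model. By \eqref{eq:updatenominal} its nominal valuation coincides with that of $M$, so $F_I^{M\otimes\Act}=F_I^M=\{a_i:i\in I\}$. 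Proposition~\ref{prop:modal-definability} then applies and shows that the left-hand side holds iff $F_I^M$ is an $(\varepsilon,\eta)$-bubble in $M\otimes\Act$.

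The next step unfolds the bubble conditions. Since $d(a,b)\le r$ iff $a\sim_r b$ by threshold nesting, the condition splits into two parts. Cohesion becomes $a_i\sim^\otimes_\varepsilon a_j$ for all $i,j\in I$. Separation becomes $a_i\not\sim^\otimes_\eta c$ for all $i\in I$ and $c\in A\setminus F_I^M$. Because the naming map is a bijection, the outsiders $c$ range exactly over $\{a_j:j\in\Omega\setminus I\}$. So the outsider condition can be indexed by $j\in\Omega\setminus I$.

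The right-hand side is treated next. Each conjunct $\Sim^{\Act}_r(i,j)$ is a Boolean combination of $@$-formulas, so its truth value does not depend on the evaluation point. Therefore $M\models$ a conjunction of such formulas and their negations iff each conjunct (respectively, each negation) holds at every point of $M$. For a negated conjunct, $M\models\neg\Sim^{\Act}_\eta(i,j)$ iff $M\not\models\Sim^{\Act}_\eta(i,j)$, again by point-independence. Lemma~\ref{lem:updated-similarity} then turns the conjuncts into $a_i\sim^\otimes_\varepsilon a_j$ for $i,j\in I$ and $a_i\not\sim^\otimes_\eta a_j$ for $i\in I$, $j\in\Omega\setminus I$. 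These are exactly the conditions obtained above, and comparing the two lists closes the argument.

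The main obstacle is the handling of negation and point-independence. The notation $M\models\neg\Sim$ must be read globally, and this reading needs the observation that formulas headed by $@$ are constant across evaluation points. Without that observation one cannot move from ``$\neg\Sim$ true everywhere'' to ``$\Sim$ false'' as used in Lemma~\ref{lem:updated-similarity}. Everything else is bookkeeping with the naming bijection.
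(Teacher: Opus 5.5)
Your proof is correct and follows the paper's route: both reduce $M\models[\Act]B_I^{\varepsilon,\eta}$ to the conditions $a_i\sim^\otimes_\varepsilon a_j$ for $i,j\in I$ and $a_i\not\sim^\otimes_\eta a_j$ for $i\in I$, $j\in\Omega\setminus I$, and then translate them with Lemma~\ref{lem:updated-similarity}. The differences are minor. You get these conditions by applying Proposition~\ref{prop:modal-definability} to $M\otimes\Act$ (licensed by Propositions~\ref{prop:update-closure} and~\ref{prop:network-static}) instead of unfolding $B_I^{\varepsilon,\eta}$ by hand, and you make explicit the point-independence of $@$-formulas, which the paper uses tacitly when it passes between $M\not\models\Sim_\eta^{\Act}(i,j)$ and $M\models\neg\Sim_\eta^{\Act}(i,j)$.
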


\begin{proof}
For each $k\in\Omega$, let $V(k)=\{a_k\}$ and put
$M^\otimes:=M\otimes\Act$. By equation~\eqref{eq:updatenominal}, every
nominal $k$ still names $a_k$ in $M^\otimes$. We have
\begingroup
\setlength{\jot}{0pt}
\begin{align*}
& M\models[\Act]B_I^{\varepsilon,\eta}
\\
\text{iff}\quad
& M^\otimes\models B_I^{\varepsilon,\eta}
\\
\text{iff}\quad
& M^\otimes\models
  \bigwedge_{i\in I}@_i
  \left(\bigwedge_{j\in I}\langle\varepsilon\rangle j
  \land[\eta]f_I\right)
\\
\text{iff}\quad
& M^\otimes\models
  \bigwedge_{i\in I}
  \left(\left(@_i\bigwedge_{j\in I}\langle\varepsilon\rangle j\right)
  \land @_i[\eta]f_I\right)
\\
\text{iff}\quad
& M^\otimes\models
  \bigwedge_{i\in I}@_i\bigwedge_{j\in I}\langle\varepsilon\rangle j
  \land\bigwedge_{i\in I}@_i[\eta]f_I
\\
\text{iff}\quad
& M^\otimes\models
  \bigwedge_{i\in I}\bigwedge_{j\in I}@_i\langle\varepsilon\rangle j
  \land\bigwedge_{i\in I}@_i\neg\langle\eta\rangle\neg f_I
\\
\text{iff}\quad
& M^\otimes\models
  \bigwedge_{i\in I}\bigwedge_{j\in I}@_i\langle\varepsilon\rangle j
  \land\bigwedge_{i\in I}@_i\neg\langle\eta\rangle
  \left(\neg\bigvee_{j\in I}j\right)
\\
\intertext{Since every agent is named by exactly one nominal in
$\Omega$, we can express the complement condition using the names
in $\Omega\setminus I$:}
\text{iff}\quad
& M^\otimes\models
  \bigwedge_{i\in I}\bigwedge_{j\in I}@_i\langle\varepsilon\rangle j
  \land\bigwedge_{i\in I}@_i\neg\langle\eta\rangle
  \left(\bigvee_{j\in\Omega\setminus I}j\right)
\\
\text{iff}\quad
& M^\otimes\models
  \bigwedge_{i\in I}\bigwedge_{j\in I}@_i\langle\varepsilon\rangle j
  \land\bigwedge_{i\in I}\bigwedge_{j\in\Omega\setminus I}
  \neg @_i\langle\eta\rangle j
\\
\text{iff}\quad
& a_i\sim_\varepsilon^\otimes a_j
  \quad\text{for all }i,j\in I,
\\[-2pt]
& \text{and}\quad a_i\not\sim_\eta^\otimes a_j
  \quad\text{for all }i\in I,\ j\in\Omega\setminus I
\\
\intertext{Lemma~\ref{lem:updated-similarity} identifies these updated
relations with specified formulas:}
\text{iff}\quad
& M\models\Sim_\varepsilon^{\Act}(i,j)
  \quad\text{for all }i,j\in I,
\\[-2pt]
& \text{and}\quad M\not\models\Sim_\eta^{\Act}(i,j)
  \quad\text{for all }i\in I,\ j\in\Omega\setminus I
\\
\text{iff}\quad
& M\models
  \bigwedge_{i,j\in I}\Sim_\varepsilon^{\Act}(i,j)
  \land\bigwedge_{\substack{i\in I\\j\in\Omega\setminus I}}
  \neg\Sim_\eta^{\Act}(i,j).\tag*{\qedhere}
\end{align*}
\endgroup
\end{proof}
The following example illustrates formation through
 some feed model. Let
$A=\{a_1,\ldots,a_6\}$, $\Omega=\{i_1,\ldots,i_6\}$, and
$V(i_t)=\{a_t\}$ for $1\leq t\leq6$. Take
$P=\{p_1,p_2,p_3\}$, $G=\{0,\frac13,\frac23,1\}$, and
$I=\{i_1,i_2,i_3\}$.
For this example, take $\delta=\delta_{\mathrm H}$, where
$\delta_{\mathrm H}$ is the normalized Hamming distance
\begin{equation}
\delta_{\mathrm H}(S,T)
:=
\frac{|S\setminus T|+|T\setminus S|}{|P|}
\qquad(S,T\subseteq P).
\label{eq:example-hamming}
\end{equation}
This is a normalized metric and therefore satisfies the assumptions on
$\delta$.
In the order $p_1,p_2,p_3$, assign to $a_1,\ldots,a_6$ the profiles
\[
100,\quad010,\quad001,\quad110,\quad101,\quad011,
\]
Let $M$ be the initial network model for $\delta_{\mathrm H}$. Since
$|P|=m=3$, every value of
$\delta_{\mathrm H}$ belongs to $G$, so upward rounding has no effect, and
$M\not\models B_I^{\frac13,\frac23}$ because
$d_M(a_1,a_2)=\frac23$.

Let $\Act=(E,\pre,\post)$, where $E=\{e_0,e_1,e_2,e_3\}$ and
\[
\pre(e_0):=i_1,\qquad
\pre(e_1):=i_2,\qquad
\pre(e_2):=i_3,\qquad
\pre(e_3):=i_4\lor i_5\lor i_6.
\]
The postconditions are given in the following table.
\[
\begin{array}{c|ccc}
 &p_1&p_2&p_3\\ \hline
e_0&\top&\bot&p_3\\
e_1&p_2&\neg p_2&p_1\land p_3\\
e_2&p_1\lor p_3&p_1\land p_2&\neg p_3\\
e_3&p_1\land p_2\land p_3
   &p_1\lor p_2\lor p_3
   &(p_1\land p_2)\lor(p_1\land p_3)\lor(p_2\land p_3)
\end{array}
\]

\begin{figure}[htb]
\centering
\begin{tikzpicture}[
  agent/.style={circle,draw,minimum size=8mm,inner sep=1pt,font=\scriptsize},
  member/.style={agent,fill=black!10},
  outsider/.style={agent,fill=white},
  group/.style={draw,dashed,rounded corners,inner sep=4pt},
  panel/.style={font=\small\bfseries},
  note/.style={font=\scriptsize}
]
  \node[panel] at (1.10,2.20) {Before update};
  \node[member]  (b1) at (0.00,0.95) {$\begin{array}{c}a_1\\100\end{array}$};
  \node[member]  (b2) at (1.10,0.95) {$\begin{array}{c}a_2\\010\end{array}$};
  \node[member]  (b3) at (2.20,0.95) {$\begin{array}{c}a_3\\001\end{array}$};
  \node[outsider](b4) at (0.00,-1.10) {$\begin{array}{c}a_4\\110\end{array}$};
  \node[outsider](b5) at (1.10,-1.10) {$\begin{array}{c}a_5\\101\end{array}$};
  \node[outsider](b6) at (2.20,-1.10) {$\begin{array}{c}a_6\\011\end{array}$};
  \node[group,fit=(b1)(b2)(b3)] {};
  \node[group,fit=(b4)(b5)(b6)] {};
  \node[note] at (1.10,-0.08) {$d_M(a_1,a_2)=\frac23>0$};

  \draw[-{Latex[length=2mm]},thick] (3.05,-0.08) -- node[above,note] {$\Act$} (4.15,-0.08);

  \node[panel] at (6.10,2.20) {After update};
  \node[member]  (a1) at (5.00,0.95) {$\begin{array}{c}a_1\\100\end{array}$};
  \node[member]  (a2) at (6.10,0.95) {$\begin{array}{c}a_2\\100\end{array}$};
  \node[member]  (a3) at (7.20,0.95) {$\begin{array}{c}a_3\\100\end{array}$};
  \node[outsider](a4) at (5.00,-1.10) {$\begin{array}{c}a_4\\011\end{array}$};
  \node[outsider](a5) at (6.10,-1.10) {$\begin{array}{c}a_5\\011\end{array}$};
  \node[outsider](a6) at (7.20,-1.10) {$\begin{array}{c}a_6\\011\end{array}$};
  \node[group,fit=(a1)(a2)(a3)] {};
  \node[group,fit=(a4)(a5)(a6)] {};
  \node[note] at (6.10,-0.08) {$d_{\mathrm{in}}=0,\quad d_{\mathrm{cross}}=1$};
\end{tikzpicture}
\caption{Formation of a filter bubble under $\Act$}
\label{fig:six-agent-update}
\end{figure}

Evaluation of the postconditions gives
$S_{M\otimes\Act}(a_k)=\{p_1\}$ for $1\leq k\leq3$ and
$S_{M\otimes\Act}(a_k)=\{p_2,p_3\}$ for $4\leq k\leq6$. Hence all
distances within $F_I^{M\otimes\Act}$ are $0$, whereas all
cross-boundary distances are $1$. Since the initial model fails
$B_I^{\frac13,\frac23}$ and the updated model satisfies it,
\[
M\models\FBForm_{\Act}(I;\frac13,\frac23).
\]

\subsection{Axiomatic system $H_D$}

For $S\subseteq P$, let
\begin{equation}
\pi_S:=
\bigwedge_{p\in S}p
\land
\bigwedge_{p\in P\setminus S}\neg p,
\label{eq:exact-profile}
\end{equation}
where an empty conjunction is $\top$. Thus
$N,a\models\pi_S$ iff $S_N(a)=S$.

For the fixed effectively presented pseudometric $\delta$, let
$H_{\mathsf D}$ extend $H_0$ by replacement of provable equivalents in
arbitrary $\Lang_D$-contexts and by the following schemes, where
$r\in G$ and $\Act\in\Sigma$:

\begingroup
\makeatletter
\newcommand{\HDSchemeLabel}[2]{%
  \phantomsection\def\@currentlabel{{#1}}\label{#2}}
\makeatother
\begin{center}
\begin{tabular}{@{}ll@{}}
\hline
\multicolumn{2}{@{}l}{\textit{Network scheme}} \\[2pt]

$\displaystyle
\langle r\rangle\varphi
\leftrightarrow
\bigvee_{\substack{S,T\subseteq P\\
\lceil\delta(S,T)\rceil_G\leq r}}
\left(
\pi_S\land
\bigvee_{j\in\Omega}@_j(\pi_T\land\varphi)
\right)$
& $(\mathrm{Net}_r)$\HDSchemeLabel{Net$_r$}{ax:network-base} \\[4pt]
\hline
\multicolumn{2}{@{}l}{\textit{Reduction schemes}} \\[2pt]

$\displaystyle
[\Act]p\leftrightarrow
\bigwedge_{e\in E}\bigl(\pre(e)\to\post(e,p)\bigr)$
& $(\mathrm{RA}\mbox{-}\mathrm{Atom})$\HDSchemeLabel{RA-Atom}{ra:atom} \\

$\displaystyle [\Act]i\leftrightarrow i$
& $(\mathrm{RA}\mbox{-}\mathrm{Nom})$\HDSchemeLabel{RA-Nom}{ra:nom} \\

$\displaystyle [\Act]\neg\varphi\leftrightarrow\neg[\Act]\varphi$
& $(\mathrm{RA}\mbox{-}\mathrm{Neg})$\HDSchemeLabel{RA-Neg}{ra:neg} \\

$\displaystyle
[\Act](\varphi\land\psi)\leftrightarrow([\Act]\varphi\land[\Act]\psi)$
& $(\mathrm{RA}\mbox{-}\mathrm{Conj})$\HDSchemeLabel{RA-Conj}{ra:conj} \\

$\displaystyle [\Act]@_i\varphi\leftrightarrow @_i[\Act]\varphi$
& $(\mathrm{RA}\mbox{-}@)$\HDSchemeLabel{RA-@}{ra:at} \\

$\displaystyle
[\Act]\langle r\rangle\varphi\leftrightarrow
\bigvee_{i,j\in\Omega}
\bigl(i\land\Sim_r^{\Act}(i,j)\land @_j[\Act]\varphi\bigr)$
& $(\mathrm{RA}\mbox{-}\mathbb{T}_r)$\HDSchemeLabel{RA-$\mathbb{T}_r$}{ra:sim} \\[4pt]
\hline
\end{tabular}
\end{center}
\endgroup

The scheme $(\mathrm{Net}_r)$ connects the profile valuation with the
threshold relations generated by equation~\eqref{eq:network-relation}.
The atomic reduction uses the unique selected event and the
simultaneous evaluation of its postconditions. The nominal and
satisfaction reductions follow from preservation of the nominal
valuation, and the Boolean clauses follow from the total deterministic
update semantics. Finally, $(\mathrm{RA}\mbox{-}\mathbb{T}_r)$ uses
$\Sim_r^{\Act}(i,j)$ to describe the reconstructed relation in
the static language; Lemma~\ref{lem:updated-similarity} supplies the
required semantic equivalence.

Let $\Lang_{@}$ be the threshold-free fragment of $\Lang_0$.

\begin{lemma}[Dynamic reduction]
\label{lem:dynamic-reduction}
There is a translation
$\operatorname{red}:\Lang_D\to\Lang_{@}$ such that
$\vdash_{H_{\mathsf D}}\varphi\leftrightarrow\operatorname{red}(\varphi)$
for every $\varphi\in\Lang_D$.
Given descriptions of the feed models in $\varphi$ and a procedure
deciding the grid comparisons $\lceil\delta(S,T)\rceil_G\leq r$,
both $\operatorname{red}(\varphi)$ and a derivation of this equivalence
are effectively computable from $\varphi$.
\end{lemma}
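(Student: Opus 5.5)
We define $\operatorname{red}$ in two stages and verify each against the schemes of $H_{\mathsf D}$. \emph{Stage one} eliminates the dynamic operators: every $\Lang_D$-formula is mapped to a provably equivalent $\Lang_0$-formula. \emph{Stage two} eliminates the threshold modalities from $\Lang_0$-formulas using $(\mathrm{Net}_r)$, landing in $\Lang_{@}$.

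For stage one we use the standard innermost-first strategy of dynamic epistemic logic. Pick an occurrence $[\Act]\psi$ with $\psi\in\Lang_0$ and push $[\Act]$ inward through $\psi$ by induction on $\psi$:
\begin{itemize}
\item for atoms, nominals, $\neg$, $\land$ and $@_i$ we apply $(\mathrm{RA}\mbox{-}\mathrm{Atom})$, $(\mathrm{RA}\mbox{-}\mathrm{Nom})$, $(\mathrm{RA}\mbox{-}\mathrm{Neg})$, $(\mathrm{RA}\mbox{-}\mathrm{Conj})$ and $(\mathrm{RA}\mbox{-}@)$;
\item for $\langle r\rangle\chi$ we apply $(\mathrm{RA}\mbox{-}\mathbb{T}_r)$.
\end{itemize}
The right-hand side of $(\mathrm{RA}\mbox{-}\mathbb{T}_r)$ contains $\Sim_r^{\Act}(i,j)$, which is already in $\Lang_0$ because $\pre$ and $\post$ take values in $\Lang_0$. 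It also contains $@_j[\Act]\chi$, where $\chi$ is a proper subformula of $\langle r\rangle\chi$, so the induction hypothesis applies. Likewise $\pre(e)$ and $\post(e,p)$ in $(\mathrm{RA}\mbox{-}\mathrm{Atom})$ are static. Hence for $\psi\in\Lang_0$ we obtain $t_{\Act}(\psi)\in\Lang_0$ with $\vdash[\Act]\psi\leftrightarrow t_{\Act}(\psi)$, by induction on the structure of $\psi$ and not on formula length; this is the usual way to avoid the length blow-up.

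We then define the stage-one map by structural recursion on $\Lang_D$. It commutes with the static constructors and sends $[\Act]\varphi$ to $t_{\Act}$ applied to the already translated $\varphi$. Provable equivalence is transferred through arbitrary $\Lang_D$-contexts by the replacement rule of $H_{\mathsf D}$.

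For stage two, we eliminate $\langle r\rangle$ innermost-first using $(\mathrm{Net}_r)$. If $\varphi$ is threshold-free, the right-hand side of $(\mathrm{Net}_r)$ contains only $\pi_S$, $\pi_T$, nominals, $@_j$ and $\varphi$, so it lies in $\Lang_{@}$. Replacement again lifts each such step to contexts, and induction on the number of threshold occurrences finishes the translation. Composing the two stages gives $\operatorname{red}$ and $\vdash_{H_{\mathsf D}}\varphi\leftrightarrow\operatorname{red}(\varphi)$.

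For effectivity:
\begin{itemize}
\item $P$ is finite in the dynamic setting, so the disjunctions over $S,T\subseteq P$ in $\Sim_r^{\Act}$ and in $(\mathrm{Net}_r)$ are finite.
\item $\Omega$, $G$ and each $E$ are finite.
\item Building those disjunctions requires exactly the assumed procedure deciding $\lceil\delta(S,T)\rceil_G\leq r$.
\item The derivation is assembled from scheme instances and replacement steps, each emitted along the recursion, and is therefore computable from $\varphi$ and the descriptions of its feed models.
\end{itemize}

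The main obstacle is the termination and well-foundedness of stage one at the $(\mathrm{RA}\mbox{-}\mathbb{T}_r)$ clause. The introduced $\Sim_r^{\Act}(i,j)$ can be much larger than $\langle r\rangle\chi$ and contains further modalities, so a naive length measure fails. We handle this by always reducing innermost $[\Act]$ first and defining $t_{\Act}$ only on static arguments. Since $\Sim$, $\pre$ and $\post$ are static, they never need to be processed by $t_{\Act}$ again, and the recursion descends only into the proper subformula $\chi$. Nested updates $[\Act][\Act']\psi$ are handled because the inner one is translated to a static formula before the outer $t_{\Act}$ is applied.
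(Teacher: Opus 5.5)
Your proposal is correct and follows the paper's proof essentially step for step. Your $t_{\Act}$ is the paper's $R_{\Act}$: a recursion on the static argument that uses the fact that $\Sim$, $\pre$ and $\post$ already lie in $\Lang_0$. Your stage-one map is the paper's homomorphic extension $U$ with $U([\Act]\psi)=R_{\Act}(U(\psi))$, and your second stage is the paper's elimination of thresholds via $(\mathrm{Net}_r)$. The only difference is cosmetic: you run the second stage as innermost-first rewriting, with the number of threshold occurrences as the termination measure, whereas the paper defines $T$ by structural recursion, substituting $T(\psi)$ into the right-hand side of $(\mathrm{Net}_r)$.
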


\begin{proof}
First eliminate updates. Define $R_{\Act}:\Lang_0\to\Lang_0$ by the
matching reduction scheme, replacing residual $[\Act]\psi$ on its
right-hand side by $R_{\Act}(\psi)$. Recursion follows proper
subformulas; preconditions, postconditions, and $\Sim$-formulas are
static. Induction and replacement give
$\vdash_{H_{\mathsf D}}[\Act]\chi\leftrightarrow R_{\Act}(\chi)$.
Define $U:\Lang_D\to\Lang_0$ homomorphically on static constructors,
with $U([\Act]\psi)=R_{\Act}(U(\psi))$.
A second structural induction gives
$\vdash_{H_{\mathsf D}}\varphi\leftrightarrow U(\varphi)$.

Next eliminate thresholds. Define $T:\Lang_0\to\Lang_{@}$
homomorphically except that its $\langle r\rangle\psi$-clause is the
right-hand side of $(\mathrm{Net}_r)$ with $T(\psi)$ substituted for
$\varphi$. Induction using this scheme and replacement gives
$\vdash_{H_{\mathsf D}}\chi\leftrightarrow T(\chi)$.
Thus $\operatorname{red}=T\circ U$ has the required target and
provable equivalence.

Each recursion terminates by structural induction. The finite sets
$P$, $\Omega$, $G$, and event sets make all expansions finite; the assumed
descriptions and grid comparisons make them effective. Recording the
axiom, replacement, and propositional steps computes the derivation.
\end{proof}

We now use this reduction to obtain the dynamic counterparts of Theorem~\ref{thm:static-completeness} and Corollary~\ref{cor:static-decidability}. 

\begin{theorem}[Soundness and strong completeness of $H_{\mathsf D}$]
\label{thm:dynamic-completeness}
For the fixed profile pseudometric $\delta$ and collection $\Sigma$
of feed models, every $\Gamma\cup\{\varphi\}\subseteq\Lang_D$
satisfies
\[
\Gamma\vdash_{H_{\mathsf D}}\varphi
\quad\text{iff}\quad
\Gamma\models\varphi
\]
over network models.
\end{theorem}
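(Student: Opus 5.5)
Soundness goes scheme by scheme; completeness mirrors the proof of Theorem~\ref{thm:static-completeness}, with the dynamic reduction of Lemma~\ref{lem:dynamic-reduction} replacing the grounding lemma and the target being threshold-free.

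\textbf{Soundness.} Every network model is a static network model by Proposition~\ref{prop:network-static}, so the $H_0$ schemes remain valid.

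For $(\mathrm{Net}_r)$ I unfold equation~\eqref{eq:network-relation}. At $a$ with $S_N(a)=S$, the left side holds iff some $b$ with $\lceil\delta(S,S_N(b))\rceil_G\leq r$ satisfies $\varphi$. Every $b$ is named, and $b$ satisfies exactly one $\pi_T$, so this matches the right side.

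For the reduction schemes:
\begin{itemize}
\item $(\mathrm{RA}\mbox{-}\mathrm{Atom})$ follows from \eqref{eq:updatevaluation} and the fact that exactly one event is selected.
\item $(\mathrm{RA}\mbox{-}\mathrm{Nom})$ and $(\mathrm{RA}\mbox{-}@)$ follow from \eqref{eq:updatenominal}.
\item The Boolean clauses hold because the update is total and deterministic, and Proposition~\ref{prop:update-closure} guarantees the updated model is again a network model.
\item For $(\mathrm{RA}\mbox{-}\mathbb{T}_r)$, let $i$ name the current agent. Then $M\otimes\Act,a\models\langle r\rangle\varphi$ iff some $j$ has $a\sim_r^\otimes b_j$ and $M\otimes\Act,b_j\models\varphi$. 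Lemma~\ref{lem:updated-similarity} turns the first conjunct into $\Sim_r^{\Act}(i,j)$, and the second is $@_j[\Act]\varphi$.
\end{itemize}

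Modus ponens preserves local consequence. Necessitation, $(\mathrm{Gen}_@)$ and replacement preserve validity over network models; for replacement in $[\Act]$-contexts I argue by induction on contexts. The point is that $[\Act]$ quantifies over updates of network models, which are again network models.

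\textbf{Completeness.} Suppose $\Gamma\nvdash_{H_{\mathsf D}}\varphi$. By Lemma~\ref{lem:dynamic-reduction}, each $\chi$ is provably equivalent to $\operatorname{red}(\chi)\in\Lang_@$. Applying Lemma~\ref{lem:grounding}'s translation, adapted to $\Lang_@$, further makes $\chi$ provably equivalent to a Boolean combination of nominals and atoms $@_ip$. Call this $\rho'_\chi$. No ground relational atoms occur, since $\Lang_@$ has no thresholds.

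Let $C_{\mathrm n}$ be the naming part of $C_{\mathrm t}$: exactly one nominal is true. Then $\Delta:=\{\rho'_\gamma:\gamma\in\Gamma\}\cup\{C_{\mathrm n},\neg\rho'_\varphi\}$ is propositionally satisfiable. Otherwise compactness and $\vdash C_{\mathrm n}$ would yield $\Gamma\vdash\varphi$.

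Take $u\models\Delta$ and let $k$ be the unique nominal with $u(k)=1$. Fix a bijection $\nu:\Omega\to A$ and define $V(i)=\{\nu(i)\}$, with $\nu(i)\in V(p)$ iff $u(@_ip)=1$. Then define $\sim_r$ \emph{from the profiles} by \eqref{eq:network-relation}, so $N$ is a network model by construction.

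Because $u$ contains no relational constraints, nothing conflicts with this choice of $\sim_r$. This is exactly why the threshold elimination via $(\mathrm{Net}_r)$ is needed: with $C_{\mathrm t}$'s arbitrary relational atoms we could not guarantee a $\delta$-generated relation.

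The grounding truth lemma~\eqref{eq:groundingtruth} applies to $N$ viewed as a static network model. It gives $N,\nu(k)\models\chi$ iff $u\models\rho'_\chi$ for $\chi\in\Lang_@$. Soundness then gives $N,\nu(k)\models\chi$ iff $N,\nu(k)\models\operatorname{red}(\chi)$. Hence $N,\nu(k)\models\Gamma$ and $N,\nu(k)\not\models\varphi$.

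\textbf{Main obstacle.} The delicate point is soundness of replacement and of the $(\mathrm{Net}_r)$ and $(\mathrm{RA}\mbox{-}\mathbb{T}_r)$ schemes across nested updates. It requires that every model reached by iterated updates is again a network model for the same $\delta$, which Proposition~\ref{prop:update-closure} supplies. A second point to check is that the proof of Lemma~\ref{lem:grounding} uses only $H_0$ schemes, so it carries over to $H_{\mathsf D}$ unchanged.

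Finally, $P$ is finite here, so each $\pi_S$ and each $\Sim$ formula is a finite formula. No effectivity of $\delta$ is needed for completeness itself, only for the decidability corollary.
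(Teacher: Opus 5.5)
Your proposal is correct and follows essentially the paper's route. Soundness goes scheme by scheme, with update closure justifying replacement under $[\Act]$. Completeness reduces to the threshold-free fragment $\Lang_{@}$ via Lemma~\ref{lem:dynamic-reduction}, builds a countermodel for the reducts, and generates the relations from the profiles by \eqref{eq:network-relation}, which threshold-free reducts cannot detect. The only difference is cosmetic: the paper cites Theorem~\ref{thm:static-completeness} as a black box and then regenerates the relations of the resulting static model, whereas you rerun its propositional-valuation argument with only the naming constraint and define $\sim_r$ from the profiles from the start.
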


\begin{proof}
For soundness, we detail $(\mathrm{RA}\mbox{-}\mathbb{T}_r)$.
Write $V(j)=\{a_j\}$. Update semantics, preservation of names, and
Lemma~\ref{lem:updated-similarity} give
\[
\begin{aligned}
M,a_i\models[\Act]\langle r\rangle\varphi
&\quad\text{iff}\quad
\exists j\in\Omega\;
\bigl(a_i\sim_r^{\otimes}a_j
\text{ and }M\otimes\Act,a_j\models\varphi\bigr)\\
&\quad\text{iff}\quad
M,a_i\models\bigvee_{j\in\Omega}
\bigl(\Sim_r^{\Act}(i,j)\land@_j[\Act]\varphi\bigr).
\end{aligned}
\]
Since $i$ is the only nominal true at $a_i$, this is the required
scheme. The remaining schemes and rules are sound by static
soundness and the network and update clauses; update closure
justifies replacement inside update contexts.

For strong completeness, suppose
$\Gamma\nvdash_{H_{\mathsf D}}\varphi$ and put
$\operatorname{red}[\Gamma]
:=\{\operatorname{red}(\gamma):\gamma\in\Gamma\}$.
Then $\operatorname{red}[\Gamma]
\nvdash_{H_0}\operatorname{red}(\varphi)$.
Otherwise, its finite derivation uses only finitely many reduced
premises. By Lemma~\ref{lem:dynamic-reduction}, each is derivable
from its original premise in $H_{\mathsf D}$. Since this system
contains $H_0$, reproducing the derivation and using the conclusion's
reduction equivalence gives $\Gamma\vdash_{H_{\mathsf D}}\varphi$,
a contradiction.

Theorem~\ref{thm:static-completeness} therefore supplies a pointed
static model satisfying $\operatorname{red}[\Gamma]$ and falsifying
$\operatorname{red}(\varphi)$. Regenerate its relations
by~\eqref{eq:network-relation}, retaining its domain, valuation, and
evaluation point. The result is a network model. All reducts retain
their truth values: induction uses only atoms, nominals, Boolean
connectives, and satisfaction operators. By soundness and the
reduction equivalences, this network model satisfies $\Gamma$ and
falsifies $\varphi$. Thus strong completeness holds for arbitrary
$\Gamma$.
\end{proof}

\begin{corollary}[Decidability of $\Lang_D$]
\label{cor:dynamic-decidability}
For every finite collection $\Sigma$ of effectively presented feed
models and every profile pseudometric $\delta$ with decidable grid
comparisons, satisfiability and validity for $\Lang_D$ over network
models are decidable.
\end{corollary}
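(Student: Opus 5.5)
The plan is to reduce the dynamic problem to a finite propositional check, as in Corollary~\ref{cor:static-decidability}, with Lemma~\ref{lem:dynamic-reduction} doing the main work. Given $\varphi\in\Lang_D$, I will first compute $\operatorname{red}(\varphi)\in\Lang_{@}$. This step is effective for three reasons: $\Sigma$ is finite and its feed models are effectively presented, so every precondition and postcondition occurring in $\varphi$ can be read off; the grid comparisons $\lceil\delta(S,T)\rceil_G\leq r$ are decidable by hypothesis; and $P$, $\Omega$, $G$ and every event set are finite, so every expansion is a finite formula. Soundness, from Theorem~\ref{thm:dynamic-completeness}, and the provable equivalence $\vdash_{H_{\mathsf D}}\varphi\leftrightarrow\operatorname{red}(\varphi)$ then give that $\varphi$ and $\operatorname{red}(\varphi)$ hold at exactly the same points of every network model.

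Next I will decide whether $\operatorname{red}(\varphi)$ is satisfiable over network models. Since $\operatorname{red}(\varphi)$ contains no threshold modalities, I can ground it propositionally, as in Lemma~\ref{lem:grounding}. Each $\tau_i$ then yields a Boolean combination of nominals and atoms $@_ip$, with no $R^r_{ij}$ variables. Let $C_{\mathrm n}$ be the naming part of $C_{\mathrm t}$, namely $\bigvee_i i\land\bigwedge_{i\neq j}\neg(i\land j)$. The key claim is that $\varphi$ is satisfiable over network models iff $C_{\mathrm n}\land\rho_{\operatorname{red}(\varphi)}$ is propositionally satisfiable, and this is decided by truth tables.

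The claim has two directions. Left to right: a network model is a static model by Proposition~\ref{prop:network-static}, so its induced valuation satisfies $C_{\mathrm n}$, and by \eqref{eq:groundingtruth} it satisfies $\rho_{\operatorname{red}(\varphi)}$ at the evaluation point. Right to left: given a satisfying propositional valuation $u$, I build $A$, $V$ and the evaluation point exactly as in the proof of Theorem~\ref{thm:static-completeness}. I then define the relations $\sim_r$ by \eqref{eq:network-relation} instead of from $u$, as in the regeneration step of Theorem~\ref{thm:dynamic-completeness}. The result is a network model, and because the reduct is threshold-free its truth value depends only on atoms and nominals, so the model satisfies $\operatorname{red}(\varphi)$ and hence $\varphi$.

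Validity follows by reducing to satisfiability: $\varphi$ is valid iff $\neg\varphi$ is unsatisfiable. The only delicate point is that $\delta$ is used on subsets of the finite set $P$, so only finitely many comparisons are ever needed, and the decidability hypothesis covers exactly these.
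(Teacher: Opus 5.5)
Your proposal is correct and follows the paper's route. You compute $\operatorname{red}(\varphi)$ via Lemma~\ref{lem:dynamic-reduction}, transfer truth by soundness, and use the fact that a threshold-free reduct keeps its truth value when the relations are regenerated by \eqref{eq:network-relation}. The only difference is cosmetic: the paper invokes Corollary~\ref{cor:static-decidability} (with the full $C_{\mathrm t}$), whereas you inline that procedure and keep only the naming conjuncts. This is equivalent, because $\rho_{\operatorname{red}(\varphi)}$ contains no $R^r_{ij}$ variables and the relational part of $C_{\mathrm t}$ is satisfiable on its own, for example by making every $R^r_{ij}$ true.
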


\begin{proof}
Given $\varphi$, compute $\operatorname{red}(\varphi)$ by
Lemma~\ref{lem:dynamic-reduction} using the assumed effective
descriptions and comparisons. The preceding model conversion and
Proposition~\ref{prop:network-static} show that this reduct has the
same satisfiability and validity over static and network models.
Corollary~\ref{cor:static-decidability} decides both questions for
the reduct, and its valid reduction equivalence transfers the
answers to $\varphi$.
\end{proof}
\section{Conclusion}

In this paper, we provided logics for reasoning about filter bubbles, capturing the structural description of a bubble-shaped group separately from its transformation under a personalized feed. We first constructed a static logic that takes threshold relations as primitive and characterizes bubble-shaped configurations through internal cohesion and separation from outsiders. For the dynamic analysis, we introduced network models whose relations are generated from attitude profiles by a fixed pseudometric. A feed model assigned events through preconditions and determined the resulting attitudes through postconditions. The updated relations were then reconstructed from the new profiles by the same comparison rule. The dynamic language distinguished the formation, persistence, and dissolution of a bubble under a specified feed model. We proved that the static and dynamic logics are sound and strongly complete and that their satisfiability and validity problems are decidable.

\printbibliography

\end{document}